\algnewcommand\algorithmicforeach{\textbf{for each}}
\newcommand{\rimp}{\Rightarrow}
\title  { Monitoring Cumulative Cost Properties}
     \newtheorem{problem}{Problem}
\newtheorem{theorem} {Theorem}
 \newtheorem{definition}{Definition}
 \newtheorem{example}{Example}
\author{Omar Al-Bataineh$^{\star}$, Daniel Jun Xian Ng$^{\ast}$, and  Arvind Easwaran$^{\ast}$ \\ $^{\star}$National University of Singapore\\$^{\ast}$Nanyang Technological University}
\date{}
\begin{document}

\thispagestyle{plain}
\pagestyle{plain}

\maketitle

\begin{abstract}

This paper considers the  problem of decentralized monitoring
of a class of non-functional properties (NFPs) with quantitative operators, namely cumulative cost properties. 
The decentralized monitoring of NFPs can be a non-trivial task
for several reasons:
(i) they are typically expressed at a high abstraction level 
where inter-event dependencies are hidden, 
(ii) NFPs are difficult to be monitored in a decentralized way,
and (iii) lack of effective decomposition techniques. We   address   these   issues   by   providing   a formal  framework  for  decentralised  monitoring of LTL formulas with quantitative operators.
The presented  framework  employs the tableau construction and a formula unwinding technique (i.e., a transformation technique that preserves the semantics of the original formula) to split and distribute the input LTL formula and the corresponding quantitative constraint in a way such that monitoring can be performed in a decentralised manner. The employment of these techniques allows  processes to detect early violations of monitored properties and  perform some corrective or recovery actions.  
We demonstrate the effectiveness of the presented
 framework using a case study based on a Fischertechnik training model,
a sorting line which sorts tokens based 
on their color into storage bins. 
The analysis of the case study shows the effectiveness of the  
presented framework not only in early detection of violations, 
but also in developing
 failure recovery plans that can help to avoid
serious impact of failures on the performance of the system.

\end{abstract}

\section{Introduction}

Given the concept of Industry 4.0 \cite{FM4.0}, conventional factories and critical infrastructures evolve into ``smart systems'', which integrate the physical devices and equipment with the cyber communications, creating a critical distributed system. With the growing scale of systems, it is challenging to maintain the stability under all operating conditions. Reducing the downtime and increasing the resiliency to faults become a crucial issue in the system design. Besides, the rapid evolution of systems has led to a significant increase in  systems complexity. This further introduces new challenges in satisfying all the system requirements during the design and execution.

However, to increase the fault tolerance and resilience for distributed systems, many researchers have suggested the use of non-functional properties to evaluate the performance of the systems. An NFP is a specific requirement to evaluate the Quality of Service (QoS) that the system can provide \cite{Chung2009}. For example, execution latency (response time) is a critical NFP since the users normally need to finish a mission in a certain time period. To this end, researchers have designed Assume-Guarantee (A-G) contracts, defined in \cite{Benveniste2012}, to supervise the NPFs of the systems in a centralized fashion.

Unfortunately, for a large-scale distributed systems with numerous processes, one cannot identify the source of the faults whenever the system violates the monitored property (i.e., a formula formalising a requirement over the system's global behaviour which is typically expressed as a Liner Temporal Logic formula). To solve this problem, one can decompose the global formula into simpler sub-formulas. Each sub-formula is monitored by a certain process of the system. Given this decentralized framework, we can rapidly detect the source of a fault if a specific sub-formula fails. However, new challenges also arise in the decentralized framework.

\begin{comment}

Run-time verification (RV) has been recognized as one of the integral parts of software and hardware design process. RV is a lightweight formal method that aims to verify the conformance of the executions of the system with respect to some desired properties. Typically the system is considered as a black box that feeds events to a monitor. An event usually consists of a set of atomic variables that describe some abstract operations in the system.  Based on the received events, 
the monitor emits verdicts in a truth domain that indicate whether or not the run
complies with the property.  

\end{comment}

Building a decentralized runtime monitor for a distributed system is a non-trivial task since it involves designing a distributed algorithm that coordinates the monitors in order to reason consistently about the temporal behaviour of the system.  
The formula decomposition techniques can play an important role 
in decentralized monitoring, as it allows the system
to be organized into a set of disjoint groups of processes
where each group is responsible for monitoring
a unique part of the formula.
Formula decomposition techniques can therefore help to improve
the scalability and efficiency of the solution specially when
dealing with large-scale systems.

The main challenge we encounter when developing a decentralized monitoring
solution for distributed systems is how to deal with  properties that are expressed at a high level of abstraction, where inter-event dependencies are hidden.
 An example of such properties is the response time properties of systems, 
which verify the accumulation or difference between the time at which the request occurs
  and the time at which the response is produced.
To address this challenge, we introduce what we call the notion of formula unwinding technique.

The formula unwinding technique
aims at transforming  a system-level formula
into a new formula that is semantically equivalent to the original
formula but makes event dependencies explicit.
The unwinding technique is performed in a way such that
satisfaction/falsification of unwound formula
implies satisfaction/falsification  of the original formula.
The resulting unwound formula is then decomposed into a set of sub-formulas (by tableau decomposition) that reintroduce all intermediate events and modules involved in the monitoring of the initial property. Each sub-formula is then assigned to a process/module for monitoring. 
 The key advantage of the presented monitoring framework is that violations of the monitored formula may be detected far ahead before the actual violation occurs. This allows processes to perform some recovery plans or   corrective  actions
 to  avoid violation of the global formula
 or to mitigate its effect on the entire system.

\begin{comment}
 To demonstrate the effectiveness of the
 unwinding technique in optimising decentralized 
 monitoring of NFPs
 of systems, we consider a case study  
based on a Fischertechnik training model. 
The Fischertechnik model is a sorting line which sorts tokens based 
on their color into storage bins. 
We monitor several properties of the system related to its response times.
The monitoring of properties is performed in a decentralized way using both
the unwinding and tableau techniques.
\end{comment} 
 
 \paragraph*{\textbf{Contributions}} We summarize contributions as follows.
 
\begin{itemize}

\item We describe a methodology of creating distributed monitors for monitoring of cumulative cost properties under the assumption
where processes are synchronous and the formula is represented as a tableau.
Specifically, we consider  properties such as execution time, power consumption, memory consumption, etc.
For short we denote such class of properties as CNFPs.
%The framework employs some sound transformation  techniques for LTL formulas that are necessary to ensure the efficient decentralised monitoring of CNFPs.

\item  We develop an unwinding algorithm
for CNFPs that can 
be used to transform a system-level formula
into a new formula that is semantically equivalent to the original
formula but makes component event dependencies explicit.
The unwinding algorithm  helps to optimise
 decentralised monitoring
of CNFPs in a way such that violations can  be  detected  way  before  the  original  property would fail.

\item We develop a tableau-based algorithm
that can be used to organise processes of 
a given  system into disjoint groups
where each group can monitor a unique part of the formula.
%The number and size of constructed groups depends mainly
%on the syntactic structure of the input LTL formula%
%and the observation power of  processes.
The developed tableau algorithm helps to reduce  the
 complexity of the  monitoring problem
without compromising soundness. 
The problem of splitting monitoring of systems into simpler monitoring tasks is an interesting research problem, especially when considering applications like cloud, edge and fog computing.

\item  We demonstrate the effectiveness of the presented
  framework for monitoring cumulative cost properties by considering response time properties
 of  systems using a case study  
based on a Fischertechnik training model.
A short  video documentation of the case study is available at \url{https://youtu.be/5CUH0Z2qaBM}.

\end{itemize}

\section{Background}

\subsection{Decentralized Monitoring Problem}

A distributed program $\mathcal{P} = \{p_0,p_1,...,p_{n-1} \}$ is a set of $n$ processes
 working together to achieve a certain task. 
%Following Bauer and Falcone setting \cite{BauerF12},
%we assume that no two processes share a common variable.  
Each process of the system emits events at discrete time instances. Each event $\sigma$ is a set of actions denoted by some atomic propositions from the set $AP$.
We denote $2^{AP}$ by $\Sigma$ and call it the alphabet of the system. We assume that the distributed system operates under the perfect synchrony hypothesis \cite{BauerF12}, and that each process sends and receives messages at \textit{discrete} instances of time, which are represented using identifier $t \in \mathbb{N}^{\geq 0}$.

We assume that each process $p_i$ has a set of input variables 
denoted as $IN(p_i)$ and set of output variables denoted as $OUT(p_i)$.  
%In our setting, we assume these variables are Boolean variables. 
We use a projection function $\Pi_i$ to restrict atomic propositions to the local view of monitor $M_i$ attached to process $p_i$, which can only observe events of process $p_i$. For atomic propositions (local to process $p_i$), $\Pi_i: 2^{AP} \rightarrow 2^{AP}$, and we denote $AP_i = \Pi_i (AP)$, for all $i =1...n$. For events, $\Pi_i :2^{\Sigma} \rightarrow 2^{\Sigma}$ and we denote $\Sigma_i = \Pi_i (\Sigma)$ for all $i= 0...n-1$. %We assume that $\forall_{i, j \leq n, i \neq j}  \rimp AP_i \cap AP_j = \emptyset$ and consequently  $\forall_{i, j \leq n, i \neq j} \rimp \Sigma_i \cap \Sigma_j = \emptyset$. That is, events are local to the processes where they are monitored.
The system's global trace, $g = (g_1, g_2,..., g_n)$ can now be described as a sequence of pair-wise unions of the local events of each process's traces. We denote the set of all possible events in $p_i$ by $E_i$  and the set of all events of $\mathcal{P}$ by $E_{\mathcal{P}} = \bigcup_{i=0}^{n-1} E_i$.
 We assume that the underlying  distributed system is enriched with computational cost: each event $\sigma$ is associated with a cost whose value  depends on the  the running cost of the process that generates that event.
Formally, we assume we have a cost function $\mathcal{C}: E_{\mathcal{P}} \rightarrow  \mathbb{N}$ that maps events of $\mathcal{P}$ to $ \mathbb{N}$, where $ \mathbb{N}$ denotes the set of natural numbers. In our setting,  the assignment of a truth value to a variable is an event and it is the occurrence of this event that we are interested in.
Finally, finite traces over an alphabet $\Sigma$ are denoted by $\Sigma^{*}$, while infinite traces are denoted by $\Sigma^{\infty}$.

\begin{definition}(LTL formulas \cite{Pnueli1977}). 
The set of LTL formulas is inductively defined by the grammar
\[
\varphi ::=  true \mid c \mid \neg \varphi \mid \varphi \lor \varphi \mid X \varphi \mid F \varphi \mid G \varphi \mid \varphi U \varphi  \mid \phi_1 \circ_{\leq q} \phi_2
\]
where $c \in AP$ and $X$ is read as next, $F$ as  eventually (in the future), 
$G$ as  always (globally), and $U$ as until.
Note that we extend the basic LTL language with the metric operator
$\circ_{\leq q}$, namely the quantitative dependency operator.
The operator will be used to express properties with arithmetic constraints.

\end{definition}

\begin{definition} (LTL Semantics \cite{Pnueli1977}). 
Let $ w = a_0 a_1.. . \in \Sigma^{w}$ be an infinite word with $i \in \mathbb{N}$
being a position. Let $d$ be a variable
whose valuation is a mapping from $d$ to $\mathbb{R}^{+}$.
%where $\mathbb{R}^{+}$ denotes the set of non-negative real numbers.
We define the semantics of LTL formulae inductively as follows

\begin{itemize}

\item $w, i \models true$

\item $w, i \models \neg \varphi$ iff $w, i \not\models \varphi$

\item $w, i \models c$ iff $ c \in a_i$
 
\item  $w, i \models \varphi_1 \lor \varphi_2$ iff $w, i \models \varphi_1$ or $w, i \models \varphi_2$

\item $w, i \models F \varphi$ iff $ w, j \models \varphi$ for some $j \geq i$

\item $w, i \models G \varphi$ iff $ w, j \models \varphi$ for all $j \geq i$

\item $w, i \models \varphi_1 U \varphi_2$ iff $\exists_{k \geq i}$ with $w, k \models \varphi_2$ and $\forall_{i \leq l < k}$ with $ w, l \models \varphi_1$

\item  $w, i \models X \varphi$ iff $w, i+1 \models  \varphi$

\item $w, i \models \phi_1 \circ_{\leq q} \phi_2$ iff $ (w, i \models  \phi_1 \land d =x)
\rimp (w, j \models  \phi_2 \land d \leq  (x+q) )$, for some $j \geq i$ and $q \in \mathbb{N}$.

\end{itemize}

\end{definition}

In our setting, a quantitative property is given as an LTL formula extended with a quantitative dependency operator: $\phi_1
\circ_{\leq q} \phi_2$ means that the computation from a state in which $\phi_1$ holds
to a state in which $\phi_2$ holds has a cost bounded by the constraint $q$. The technical challenge of
monitoring quantitative properties in this setting consists of translating global constraints into local ones. This is achieved by computing the maximal cumulative accepted cost for the completion of an
event.
We call the operator $\circ_{\leq q}$ as a quantitative dependency operator and the arithmetic constraint $q$ as CNFP constraint on the cumulative cost.
We call an LTL property that contains the operator $\circ_{\leq q}$ as ``cumulative cost'' property.

\begin{comment}

We now review the definition of three-valued semantics LTL$_3$ that is used to interpret common LTL formulae,
as defined in \cite{Bauer2011}. The semantics of  LTL$_3$ is defined on finite prefixes to obtain a truth value from the set $\mathbb{B}_3 = \{ \top, \bot, ?  \}$.

\begin{definition} (LTL$_3$ semantics). Let $u \in \Sigma^{*}$ denote a finite word.  The truth value of an LTL$_3$ formula $\varphi$ with respect to $u$, denoted by $[u \models \varphi]$, is an element of $\mathbb{B}_3$ defined as follows:

$$
[u \models \varphi]  = 
\begin{cases}
\top & \textrm{if $\forall \sigma\in \Sigma^{\infty} : u\sigma \models \varphi $} \\
 \bot & \textrm{if  $\forall \sigma\in \Sigma^{\infty} : u\sigma \not\models \varphi $} \\
 ? & otherwise
\end{cases}
$$

\end{definition}

According to the semantics of LTL$_3$ the outcome of the evaluation of $\varphi$ can be inconclusive (?). This happens if the so far observed prefix $u$ itself is insufficient to determine how $\varphi$ evaluates in any possible future continuation of $u$.

\end{comment}

\begin{problem} (\textbf{The decentralized monitoring problem}).
Given a distributed system $\mathcal{P} = \{p_0, p_1,..., p_{n-1}\}$,a finite global trace $g \in \Sigma^{*}$, an $LTL$ property $\varphi$ with a set of atomic propositions $AP$
formalising a requirement over the system global behaviour, and a set of monitor processes $ \mathcal{M} = \{ M_1, M_2,..., M_n\}$ such that

\begin{itemize}

\item each process $p_i$ has a local set of propositions $AP_i \subset AP$,

\item each process $p_i$ has a local monitor $M_i$, 

\item each process $p_i$ has a partial view of the global trace $g$, 

\item  monitor $M_i$ can observe local events of $p_i$, %(i.e., $M_i$ can observe events of process $p_i$ denoted by propositions  $AP_i$), 

\item  monitor $M_i$ can communicate with the other monitors.

\end{itemize}

The decentralised monitoring problem aims to design  an  algorithm  for  distributing  and  monitoring
$\varphi$, such that satisfaction or violation of $\varphi$ can be detected by local monitors.
Before proceeding further, let us consider a simple example of a  distributed system and a cumulative cost property by which we demonstrate some of the notions introduced in this section. 
%Before proceeding further, let us consider a very simple example of a distributed system with a cumulative cost property.}
\end{problem}
 
%The main constraint that decentralised LTL monitoring addresses  is the lack of a global
%sensor or monitor and a central decision making point asserting whether the system's global trace $\alpha$ has violated or satisfied the property $\varphi$.

\begin{example}
Suppose we have a distributed system $\mathcal{P}$ with three processes 
$p_0, p_1$ and $p_2$ as described in the  graph given in Fig. \ref{fig:Ex1}. As one can see, there are four variables in the graph $(I_0, O_0, O_1, O_f) $. We call the variable $I_0$ as environment  variable
and the variables $O_0, O_1$ and $O_f$ as dependent variables.
%We call process $p_0$ as source process, process $p_1$ as intermediate process, and process $p_2$ as sink process.
%The dependency relationship among variables can be expressed  as $ I_0 \circ O_0 $,  $   O_0 \circ O_1 $,  and  $O_1 \circ  O_f$. There is a single dependency path in the graph which is $I_0 \circ  O_0 \circ O_1 \circ O_f$.
Note that when the truth value of $O_0$ is not issued by process $p_0$ then the truth value of both $O_1$ and $O_f$ will not be issued by processes  $p_1$ and $p_2$ due to the dependency relationships. We assume that the assignment of a truth value to each  variable is associated with a cost which depends on the running costs of the processes $p_0, p_1$ and $p_2$. 
We would like then to monitor the cumulative cost property
$\varphi = G (I_0 \circ_{\leq q} O_f) $. As one can see, the cost is accumulated from one variable to another so that the cost of generating the variable $O_f$ is the sum of individual costs of $O_0$ and $O_1$ and $O_f$. However, to ensure the satisfaction of $\varphi$,  the cumulated costs must not exceed the bound $q$. 
\begin{figure}[h]
  \includegraphics[width=\linewidth]{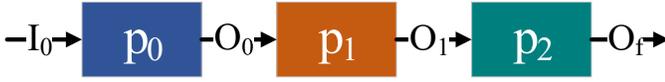}
  \caption{A dependency graph of  a simple system with three processes}
  \label{fig:Ex1}
\end{figure}
\end{example}

\subsection{Tableau Construction for  LTL } \label{sec:tableau}

There are various tableau systems for LTL \cite{Beth59,Smullyan69,Emerson1982,Reynolds2016}.
However, in this work we selected Reynolds's implicit
declarative one \cite{Reynolds2016}. The interesting completeness and termination of the tableau, in addition to its efficiency and simplicity are the key reasons for choosing
this style of tableau.  
Given an LTL formula $\varphi$ we construct a directed graph (tableau) $\mathcal{T}_{\varphi}$ using the standard expansion rules for LTL. 
Applying expansion rules to a formula leads to a new formula but with an equivalent semantics. 
We review here the basic expansion rules of temporal logic:  (1) $G p \equiv p \land XG p$,
(2) $F p \equiv p \lor XF p$, and (3) $p ~ U q \equiv q \lor (p \land X(p~ U q)).$ 
Tableau expansion rules for propositional logic are very straightforward 
and can be described as follows:

\begin{itemize}

\item If a branch of the tableau contains a conjunctive formula $A \land B$, 
add to its leaf the chain of two nodes containing the formulas $A$ and $B$.

\item   If a node on a branch contains a disjunctive formula $A \lor B$, 
then create two sibling children to the leaf of the branch, 
containing $A$ and $B$, respectively.

\end{itemize}

The labels on the tableau proposed by Reynolds are just sets of
formulas from the closure set of the original
formula. 
Note that one can use De Morgan's laws during
the expansion of the tableau, so that for example,
 $\neg (a \land b)$ is treated as $\neg a \lor \neg b$.
A node in $\mathcal{T}_{\varphi}$ is called a leaf if it  has zero children. 
A leaf may be crossed ($\times$), indicating its branch has failed 
(i.e., contains opposite literals), or ticked $\surd$, 
indicating its branch is successful. The whole tableau $\mathcal{T}_{\varphi}$ 
is successful if there is at least a single successful branch.

\begin{figure} [h]
   \begin{minipage}[b]{0.49\linewidth}
    \begin{center}
    \begin{tikzcd}  
 & p \land (q \lor r) \arrow{d}  \\
	 & p, (q \lor r) \arrow{ld} \arrow{rd} \\
    p, q   & & p, r  \\
    \surd  & & \surd
 \end{tikzcd}  
     \end{center}
     \caption{A tableau for  $(p \land (q \lor r))$} \label{Fig:FirstEx}
        \end{minipage}
   \begin{minipage}[b]{0.5\linewidth}
    \begin{center}
    \begin{tikzcd}  
   Gp \arrow{d}   \\
   p, XGp \arrow{d}   \\
   Gp \arrow{d}  \\
      p, XGp   \\
      \surd
 \end{tikzcd}  
     \end{center}
     \caption{A tableau for $Gp$} \label{Fig:GP}
   \end{minipage}
   \end{figure}

Reynolds \cite{Reynolds2016} introduced
a new  tableau rule (the PRUNE rule) which supports a new simple traditional
style tree-shaped tableau for LTL. 
The PRUNE rule provides a simple way to curtail repetitive branch
extension. The PRUNE rule works  as follows. 
If a node at the end of a branch has a label
which has appeared already twice above, and between the second and third appearance there are no new eventualities satisfied that were not already satisfied between the first and second appearances then that whole interval of states (second to third appearance) has been useless. In this case we cut the construction
and declare that the branch is unsuccessful.

Fig. \ref{Fig:FirstEx} represents a tableau for a simple propositional logic formula and Fig. \ref{Fig:GP} represents a tableau for a temporal logic formula.
Using the PRUNE rule and the LOOP rule (a rule that cuts
construction after a poised  label
appears two times in the branch) we 
guarantee completeness and termination of  the tableau construction 
(i.e., it always terminates and returns a semantic graph for the monitored formula
including formulas containing nested temporal operators) \cite{Reynolds2016}.
For example, the formula $G p$ (see Fig. \ref{Fig:GP})
gives rise to a very repetitive infinite tableau without the LOOP rule, 
but succeeds quickly with it.
We first break down the formula into its elementary ones.  
Note that the atoms and their negations can be
satisfied immediately provided there are no contradictions,
 but to reason about the $X$ formula ($ XGp$)
 we need to move forwards in time. Reasoning switches
to the next time point and we carry over only information nested below $X$.

To demonstrate how one can construct a tableau for cumulative cost formulas,
let us construct  the formula $G ((a \land b) \circ_{\leq q} c)$ (see Fig. \ref{Fig:DependencyTableau}).
In the given tableau we use the basic
tableau decomposition rules (the $G$-rule, the $X$-rule, and the $\land$-rule) to decompose the formula in addition 
to the distributive law for the quantitative dependency operator. 
The quantitative operator $\circ_{\leq q}$ satisfies the $\land$-distributive law so that $((a \land b) \circ_{\leq q} c) \equiv (a \circ_{\leq q} c) \land (b \circ_{\leq q} c)$ and the $\lor$-distributive law so that $((a \lor b) \circ_{\leq q} c) \equiv (a \circ_{\leq q} c) \lor (b \circ_{\leq q} c)$.
Note  that we do not decompose  dependency formulas of the form $(a ~ \circ_{\leq q} ~ c)$  as they do not contain temporal or logical connectives. Note also that quantitative dependency formulas of the form $(a ~ \circ_{\leq q} ~ c)$ represent the simplest form of quantitative dependency formulas that maybe encountered when dealing with cumulative cost properties and hence they cannot be split into simpler ones.

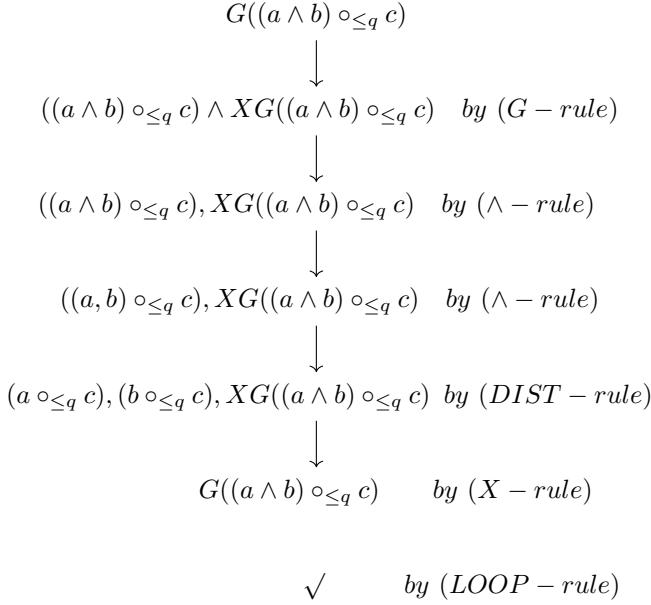
\begin{figure}
  \begin{center}
    \begin{tikzcd}  
  G ((a \land b) \circ_{\leq q} c) \arrow{d}   \\
    \hspace*{10pt} ((a \land b) \circ_{\leq q}  c) \land XG ((a \land b) \circ_{\leq q}  c)\arrow{d}  \hspace*{10pt} by~ (G-rule) \\
    ((a \land b) \circ_{\leq q}  c), XG ((a \land b) \circ_{\leq q}  c) \arrow{d}  \hspace*{10pt}by~ (\land-rule)\\
     \hspace*{10pt}     ((a, b) \circ_{\leq q}  c), XG((a \land b) \circ_{\leq q}  c)  \arrow{d}{}    \hspace*{10pt}by~ (\land-rule) \\
 \hspace*{10pt}    (a \circ_{\leq q}  c), (b \circ_{\leq q}  c), XG((a \land b) \circ_{\leq q}  c)  \arrow{d}{}    \hspace*{5pt}by~ (DIST-rule)\\
    %   \hspace*{10pt}     (a,  c)_{D}, (b,  c)_{D}, XG((a \land b) \circ c)  \arrow{d}{}   \hspace*{10pt}by~ (\circ-rule) \\
    \hspace*{60pt}    G((a \land b) \circ_{\leq q}  c) \hspace*{20pt} by~ (X-rule)  \\ 
       \hspace*{110pt}       \surd \hspace*{30pt} by~ (LOOP-rule) 
 \end{tikzcd}  
     \end{center}
     \caption{A tableau for a cumulative cost formula $G ((a \land b) \circ_{\leq q} c)$} \label{Fig:DependencyTableau}
   \end{figure}

%\vspace*{-0.5em}

\section{Cumulative Cost Properties} \label{Sec:unfoldingProcess}

NFPs are a class of properties
that are used to express quality attributes of the system.
There is a great variety of NFPs
that can be considered when verifying distributed systems
such as performance, reliability, maintainability and safety.
In this work, we are interested in NFPs that are cumulative in nature
such as response time, energy consumption, memory consumption, etc. 
CNFPs typically contain some constraints 
 related to the running cost of the system. We call such properties as cumulative cost properties (see Definition \ref{CCP}).

\begin{definition} \label{CCP}(\textbf{Cumulative cost properties}).
Let $P $ be a distributed system 
and $\varphi$ be an LTL formula formalising some property of the system $P$. We call the property $\varphi$  a cumulative cost property if  $\varphi$ contains some  quantitative dependency  operator of the form $\circ_{\leq q}$, where $q \in \ \mathbb{N},$ corresponds to the cost cumulated along a running path of $P$ until  certain event is reached denoted by some propositions in $\varphi$.
%denoted by some propositions in $\varphi$.
%The bound $q$ is evaluated to a non-negative integer. 
The manner in which costs are accumulated from one event to another depends on the model  representing the distributed system $P$. 
\end{definition}

Before introducing the notion of unwinding process for cumulative cost properties of systems,
let us discuss first the types of variables that may be encountered when dealing with a distributed system,
which can be classified as follows.
\begin{itemize}

\item \textbf{Independent variables (environment variables)}. 
An independent variable is the variable that is controlled 
and manipulated by the environment. 
 It is independent from the behaviour of the processes of the system.

\item  \textbf{Dependent variables}.  A dependent variable is the variable that is generated
from some process of the system. So that the truth value of the variable
depends on the truth values of some other variables.

\end{itemize}

We assume here we have a dependency graph
of the system that shows the dependency relationships
among its processes. 
We use the dependency graph
to identify dependent variables
and the set of variables that affect their truth values.

\begin{definition} (\textbf{Dependency graph}).
A dependency graph $\mathcal{G}$ of a system  $\mathcal{P}$
is a tuple of the form $(\mathcal{P}, \mathcal{R}, V)$, where

\begin{itemize}
\item $P= \{p_0,..., p_{n-1}\}$ is a set of processes of $P$,

\item  $\mathcal{R} \subseteq \mathcal{P}  \times \mathcal{P} $ is a transition relationship between the processes of the system $\mathcal{P}$,
\item $V = D \cup E$ is the set of variables of the system $P$, where $D$ represents the set of dependent variables and  $E$  represents the set of environment variables.

\end{itemize}

\end{definition}

We assume that a dependency graph does not have any circular dependencies: it forms a directed acyclic graph. Formally,  we require that the transitive closure $\mathcal{R}^{+}$ of the relation $\mathcal{R}$ to be irreflexive; i.e. $(p, p) \not\in \mathcal{R}^{+}$ for all $p \in P$. A pair $(p_i, p_j) \in \mathcal{R}^{+}$ models a dependency (i.e., $p_i$ depends on $p_j$). That is, the output variable issued by $p_i$ depends on the output variable issued by $p_j$.
We classify processes 
in the dependency graph of a given system into three categories as follows.

\begin{enumerate}

\item \textbf{Source processes}. This type of processes have no predecessors
and at least one successor.  The input variables of source processes are called  environment variables.

\item \textbf{Intermediate processes}. This type of processes have at least one predecessor node
and one successor node.

\item \textbf{Sink processes}.  This type of processes have at least one predecessor
and zero successors. The output variables of sink processes represent
the final outputs of the system.

\end{enumerate}

CNFPs are typically given in an abstract form
where inter-variable dependencies are hidden
and hence cannot be efficiently monitored in a decentralised manner. 
To ensure the efficient  monitoring
of CNFPs, the set of intermediate variables need to be explicitly observable
in the formula (being part of the set of propositions of the formula).
To do so,  we compute for each dependent variable what we call 
 the set of dependency paths, which can be extracted from the dependency graph
 of the system. A dependency path for a variable $v$ shows the set of processes and their input and output variables that affect the truth value of the  variable $v$, 
 which is crucial for the unwinding process.
 
Note that the unwinding process of a formula proceeds by unwinding 
dependent variables one-by-one
until all variables are unwound. 
During unwinding we use the following  rules to specify dependency
relationships between variables.
Throughout the rules, we assume that the running cost of the considered process $p$ is bounded by the numerical constraint $q$.

\begin{enumerate}
\item If process $p$ 
takes a single input $I \in IN(p)$ and produces a single output $O \in OUT(p)$
then the resulting dependency formula will take the form $I \circ_{\leq q} O$.
\item If  process $p$ 
takes multiple inputs $(I_1,..., I_k) \in IN(p)$ and produces a single output $O \in OUT(p)$
then   the resulting dependency formula will take the form  
$(I_1 \land... \land I_k) \circ_{\leq q} O$.
\item If  process $p$  
takes a single input $I \in IN(p)$ and produces multiple outputs $(O_1,.., O_k) \in OUT(p)$
then breaking dependencies among variables  will yield $k$ dependency formulae of the form
$(I \circ_{\leq q} O_1, I \circ_{\leq q} O_2,..., I \circ_{\leq q} O_k)$.
\item If  process $p$ 
takes  inputs $(I_1,.., I_k) \in IN(p)$ and produces outputs $(O_1,.., O_m) \in OUT(p)$
then breaking dependencies among variables 
will yield $m$  formulae of the form
$ ((I_1 \land...\land I_k)  \circ_{\leq q} O_1, ((I_1 \land...\land I_k)   \circ_{\leq q}  O_2),.., ((I_1 \land...\land I_k)   \circ _{\leq q} O_m))$.
\end{enumerate}

\begin{comment}

The decentralised monitoring of CNFPs can be studied under different  assumptions. However, in this work, we make the following assumptions about
the class of systems and properties that can be monitored by our framework.

\begin{itemize}

\item The system is a synchronous acyclic system.

\item The availability of a dependency graph that links the system’s processes using an output-to-input fashion.

\item Lower bound of process running cost is known a priori.
For each process, it is feasible to derive an offline lower bound on the performance related to the CNFP of interest. %However, the exact value depends on the actual process execution and is only revealed when the process completes.          

\end{itemize}

\end{comment}

The decentralised monitoring of LTL formulas can be studied under different  assumptions. However, in this work, we make the following assumptions about
the class of systems and properties that can be monitored by our framework.
\begin{itemize}
\item The system is a synchronous distributed system.
%\item Processes know the observation power of each other. 
%That is, if process $A$ observes $a$ then $B$ knows that $A$ observes $a$.
\item The dependency or dataflow graph that highlights all dependencies between modules and input/output variables of the system is available in advance.
\item 
The underlying model (i.e., a distributed system) is augmented with information about cost.
That is, each event in a trace of the system is associated with a numerical value representing the cost of generating that event. 
%The cost may accumulate from one event to another depending on the order at which the events are generated.
\item The input formula defines some cumulative cost formula with a quantitative dependency operator of the form $\circ_{\leq q}$.
%which can be evaluated through the consideration of the accumulation of the event costs along some system trace.
%The cost varies depending on the nature of the formula (time, energy, etc.)
%For each process, it is feasible to derive an offline lower bound on the performance related to the CNFP of interest.
\end{itemize}

 From the given dependency graph, the initial LTL formula is translated to a set of sub-formulas (by tableau decomposition) that reintroduce all intermediate variables and modules involved in the monitoring of the initial property. Each sub-formula is then assigned to a process/module for monitoring. 
Note that the dependency graph of the system
may contain multiple dependency paths for the 
 dependent variables being unwound and hence the way 
 the running cost of the system is accumulated 
depends heavily on the structure of the dependency graph. 
Recall also that the unwinding process of CNFPs  requires a decomposition 
of the constraints in the formula 
into sub-constraints, which should be performed
while preserving the semantics of the original global formula.
Furthermore, the property of interest may contain
multiple arithmetic constraints related to the  different sub-systems
of the monitored system. 
We address these challenges  at Sections \ref{sec:unwindingAlgor} and \ref{sec:tableaAlgor}.

\section{ Monitoring Framework}

Our monitoring framework for cumulative cost properties consists of two phases: setup and monitor. The setup phase creates the monitors and defines their communication topology. The monitor phase allows the monitors to begin monitoring and propagating information to reach a verdict when possible. We first describe the formal steps of the setup phase.

\begin{itemize}

\item Unwind the original formula $\varphi$ by transforming it into a new formula that is semantically equivalent to  $\varphi$
 but makes variable dependencies explicit.
 We denote the resulting unwound formula by $\varphi^{(U)}$.

\item Negate the unwound formula $\varphi^{(U)}$ 
using the standard LTL negation propagation rules.

\item  Construct a tableau $ \mathcal{T}_{\neg \varphi^{(U)}}$ 
using the method of Sec. \ref{sec:tableau}.

%\item Assign a unique index value
%to each formula in the constructed tableau. We assume here that all
%processes use the same enumerating procedure when assigning index values.
\end{itemize}

The presented decentralised framework consists mainly of two components:
the unwinding component which is described in details at Sec. \ref{sec:unwindingAlgor}
and the decomposition component  which described in details at Sec. \ref{sec:tableaAlgor}.
The unwinding component aims at transforming a system-level formula into a new formula that is semantically equivalent to the original
formula but makes variable dependencies explicit.
 This is crucial for the effectiveness of the
decentralised monitoring of CNFPs.
The decomposition component aims at organising processes into
disjoint groups using tableau.
However, since branches in tableau represent ways 
to satisfy the original formula, we choose
to negate the  formula using LTL negative propagation rules
 before decomposing it using the tableau technique.
In this case, each branch in the constructed tableau 
represents a way to falsify the formula
and therefore violations detected by processes
that monitor a formula representing the semantics of
some  branch in the constructed tableau is a global violation.

Given a distributed system $\mathcal{P} = \{p_0, p_1,..., p_{n-1}\}$, a finite global trace $g = (g_0, g_1,..., g_n) \in \Sigma^{*}$, and an $LTL$ property $\varphi$ 
formalising a requirement over the system $P$
and  $\varphi^{(U)}$ be the unwound version of $\varphi$. We now summarize the  monitoring steps in the form of an algorithm that describes how  process $p_i$ makes decisions regarding the monitored formula  $\varphi^{(U)}$:

\begin{enumerate}
\item $[$Read next event$]$. Read next $\sigma_i \in g_i$ (initially each process reads $\sigma_0$), where $g_i$ is the local trace for  $p_i$.
\item $[$Send new observations$]$. Propagate new observations as pairs of the form $(idx(\phi), val)$ to the successor process, where $idx(\phi)$ 
is the index value of the formula $\phi$ and   $val \in \{true, false, unknown\}$.
\item $[$Receive new observations$]$. Receive new
observations and evaluate the formula $\varphi^{(U)}$.
\item 	$[$Go to step 1$]$. If the trace has not been finished or a decision has not been made then go to step 1. 
\end{enumerate}

To reduce the size of propagated messages, processes send indices of sub-formulas of $\varphi^{(U)}$ that result from the tableau decomposition  rather than formulas themselves. 
That is, we assign a unique index value to each formula in resultant tableau of the unwound formula.
This is possible as variables are pre-known to processes,
thanks to the tableau decomposition.

\subsection{Unwinding Cumulative Cost Properties} \label{sec:unwindingAlgor}

  The unwinding process of CNFPs 
  needs to be performed in a way 
  the semantics of the original formula is preserved.
 Note that the input formula may contain multiple constraints
 %(i.e., $G (a_1 \circ_{c_1} b_1) \land G %(a_2 \circ_{c_{2}} b_2) \land ... \land G %(a_n \circ_{c_{n}} b_n))$
 with a large number of dependent variables.
 It is necessary then to ensure
 that the unwinding process of a given formula is performed
in a rigorous manner.
We describe here an unwinding algorithm for CNFPs
 which consists of three steps:
 
\begin{enumerate}

\item The preprocessing step.
 The goal of this step is to detect 
dependency operators in the input formula 
and represent each of them as tuples
of the form $(L, R, q)$,
where $L$ is the left operand of $\circ$,
$R$ is the right operand of $\circ$,
and $q$ is the CNFP constraint on the cumulative cost. For example, if the input formula has the form $ G ((c\land d) \circ_{\leq 10} e)$.
Then $L = (c\land d) $, $R = e$, and $q = 10$.

\item  The unwinding step.
The goal of this step is to make all intermediate  
variables that affect the truth value of the original formula
explicitly observable in the unwound formula.
This can be performed by examining the dependency graph of the system under monitoring.

\item The constraint decomposition step. The goal of this step
is to break the arithmetic constraint $q$
 into sub-constraints for different affected sub-formulas in
the unwound formula.

\end{enumerate}

%Note that the first and second steps described above can
%be applied to any arbitrary NFP while the third
%step is specific for NFPs with additive nature.
 The unwinding algorithm (Algorithm \ref{alg: unwindingAlgor})
 takes an LTL formula $\varphi$ formalising
a cumulative cost property of interest
together with a dependency graph $\mathcal{G}$ of the system being monitored.
It returns a new formula $\varphi^{(U)}$ 
 in which all intermediate variables become explicitly observable.
%The algorithm uses a number of operations in order
%to ensure the soundness of the unwinding process. The operation 
Recall that each dependency operator in the formula being analyzed
is represented as a tuple  $(L, R, q)$,
where dependent variables  are unwound first
and then the constraint $q$ is decomposed while taking into consideration
the dependency relationships among variables and the running costs of
 processes.

During the unwinding process,
the algorithm replaces each dependent variable
by its full dependency formula (the set of variables that affect its truth value)
as derived from the dependency graph of the system being monitored.
Such replacement is performed while preserving the semantics of the original formula.
%Note that the dependency formulas generated between 
%successive iterations of the inner while loop
%are connected together using the $\land$-connective
%as they represent requirements for the satisfaction of the original formula.
%The set $D$ used by the algorithm represents the set of dependent variables of the system.
The function $DependencyPath (v_i)$ is a function
that returns the set of processes along the dependency paths of the variable $v_i$.
The function $Cost (Path)$
returns the sum of the running costs of the processes 
along the path $path$. 
Intuitively, for a path of $n$ processes $p_0,..., p_{n-1}$, 
we have  $Cost (path) = \sum_{i =0} ^{n-1} (cost(p_i))$.
Hence, the constraint associated with the dependency 
formula $\phi_i$ assigned to process $p_i$ is synthesized
using the following formula
\begin{equation} \label{SynthFormula}
c_i = q - (\sum_{j = i +1} ^{n-1} (cost (p_j)))
\end{equation}
where $q$ is an arithmetic constraint given in the original formula.
%So that the parameter $c_i$  is computed by
%subtracting the bound $C$ from the summation
%of the running costs of all successor processes
%of $p_i$ up to the process that produces the variable being unwound.
Note that Formula (\ref{SynthFormula})
takes advantage of the fact that the property being monitored
has an additive nature and hence the running cost of the system
accumulates along the  paths.
We can therefore decompose the constraint $q$
into sub-constraints by considering the running costs
of local processes.
Note that it is possible to have more than
one dependency path that leads from process $p_i$ to the process
that produces the variable being unwound.
In this case, the parameter $c_i$ is computed
by considering the path with the least cost.

\begin{algorithm}  [h!]
\begin{algorithmic}[1]
\State \textbf{Inputs} : $(\varphi, \mathcal{G})$
\State \textbf{Output} : $\varphi^{(U)} := \varphi$ 
\State $list := \emptyset$
\State $Queue$ $Waiting := \emptyset$
%\State $v = RightOperand(\varphi)$
%\State $c := ParameterVal (\varphi)$
%\If{$ v \in D$}
%\State \textbf{Add} $v$ \textbf{to} $list$
%\EndIf
\ForEach {$\circ \in Operators(\varphi)$} \Comment {Preprocessing phase}
\State $L = getLeftOPND (\circ )$
\State $R = getRightOPND (\circ )$
\State $q = getConstraint (\circ )$
\State \textbf{add} ($L, R,  q$) \textbf{to} $Waiting$
\EndFor
\While{$Waiting \neq \emptyset$}  \Comment{Unwinding phase}
\State \textbf{select} $(L, R, q)$ \textbf{from} $Waiting$
\State $finalFormula := true$
\State $unwind:= false$
%\State $DependableVar := (atoms (R) \cap D)$
\ForEach{$v_i \in atoms (R)$}
%\State \textbf{select} $v_i$ from $atoms (R)$
\State \textbf{add} $v_i$ \textbf{to} $list$
%\State $R^{'} := R$
\State $\psi := true$
\While {$ list \neq \emptyset$}
\State \textbf{select} $v_i$ from $list$
\If {$v_i \in OUT(p) \mid p \in processes(\mathcal{G})$} 
%\State $\psi := (\bigwedge_{i =1...n} (I_i \mid I_i \in IN (p))) \circ v_i)$
\State $unwind:= true$
\State $paths := DependencyPath (v_i)$
 \State $Val := \min (\forall_{path \in paths}  (Cost (path))$
\State $c_i := (q - Val)$  
\State $\psi := (\bigwedge_{i =1...n} (I_i \mid I_i \in IN (p))) \circ_{\leq c_i} v_i)$
\State $\psi^{'} := \psi^{'} \land \psi$
\EndIf
\ForEach{$v_j \in (atoms(\psi) \setminus v_i$)}
\If {$v_j \in OUT(p) \mid p \in processes(\mathcal{G})$}
\State \textbf{add} $v_j$ \textbf{to} $list$
\EndIf
\EndFor 
\EndWhile
%\If{$unwind = true$}
\State $finalFormula := (finalFormula \land \psi^{'})$
%\State  \textbf{replace} $(L \circ_{\leq q} R)$ by $\psi^{'}$ in $\varphi^{(U)}$ 
\EndFor
\If{$unwind = true$}
\State \textbf{replace}  $(L \circ_{\leq q} R)$ by $finalFormul$ in $\varphi^{(U)}$  
\EndIf
\EndWhile
\State \textbf{return} $\varphi^{(U)}$
\end{algorithmic}
\caption{Unwinding   cumulative cost formulas}  \label{alg: unwindingAlgor}
\label{alg:computingMinSet}
\end{algorithm}

\subsection{Organizing Processes into Disjoint Groups} \label{sec:tableaAlgor}

Approaches to decomposition of formulas can be classified into logical approaches
and algebraic approaches. The first are based on equivalent transformations
of formulas in propositional or temporal logic.
The second ones consider formulas as algebraic objects	
with corresponding transformation rules.
In this work, we follow the logical approach of formula decomposition 
and we adopt the tableau technique for this purpose.
It is advantageous to use tableau as a decomposition technique for decentralised monitoring. 
First, it can be used to detect tautological and unsatisfiable parts
of the formula and to propagate information about only feasible branches.
Second, it helps to reduce the complexity of the  monitoring problem.

\begin{definition} \label{ANDDecomp}(\textbf{Decomposability}). 
An LTL formula 	$\varphi$ is called disjointly OR-decomposable (or decomposable, for short)
wrt a system $P$ if it is equivalent to the disjunction $\phi_1 \lor \phi_2 \lor ... \lor \phi_n$
of some formulas $\phi_1,.., \phi_n$ such that:

\begin{enumerate}

\item $atoms(\phi_1)  \cup... \cup ~ atoms(\phi_n) = atoms (\varphi)$, where $ n > 1$;

\item $atoms(\phi_i) \neq \emptyset$, for $i =1...n$;

\item $AP_p \cap atoms(\phi_i) \cap atoms(\phi_j) = \emptyset$, for any $p \in P$ and
$i\neq j, i, j = 1..,n$.

\end{enumerate}
\end{definition}

%where $\verb+var+(\phi_i)$ denotes the set of variables of $\phi_i$ and $AP_p$
%denotes the set of  variables that are locally observed by process $p$.
where $atoms(\phi_i)$ represents the set of atomic propositions in $\phi_i$
and $AP_p$  represents the set of atomic propositions 
that are locally observed by process $p$. 
The formulas $\phi_1,.., \phi_n$ are called decomposition components of $\varphi$.
The variable sets of the components must be proper subsets of the variables of
the original formula $\varphi$. The obtained formulas define 
some partition of $atoms(\varphi)$ that is observed by 
a unique subset of processes in order to ensure disjointness.

In this work, we view a tableau $\mathcal{T}_{\varphi}$ of an LTL formula $\varphi$ 
as a set of branches $\mathcal{B}_1,..., \mathcal{B}_k$ where
each branch $\mathcal{B}_i$ consists of a sequence of nodes
$(n_0,..., n_{\ell})$, where  $n_0 = \varphi$ is the initial node
and $n_{\ell}$ is the leaf or terminal node of the branch $\mathcal{B}_i$.
The formulas at node $n_{\ell}$ are generated through the repeated application of the tableau decomposition rules
and hence they are either in their simplest form (atomic formulas)
or that no new information can be obtained from  decomposing further the formulas
(a fixed point has been reached).
Hence, we need only to examine terminal nodes of branches
when organizing processes into groups using the tableau representation.
%Suppose that the terminal node of the branch $\mathcal{B}_i$ is labelled with formulas
%$\{\psi_1,..., \psi_n\}$ then the LTL formula representing 
%the semantics of the branch $\mathcal{B}_i$ will be
%$
%\phi_i = \bigwedge_{j=1...n} (\psi_j).
%$
%The original  formula $\varphi$ can be then expressed   as
%$
%\varphi = \bigvee_{ m =1 ...k} (\phi_m).
%$

We now  describe a formula decomposition  
algorithm  (Algorithm \ref{alg:SplittingGroups})
that can be used to perform a logical decomposition 
of the formula based on the observation power of processes
and the tableau representation of the formula.
Note that the unwinding algorithm performs a decomposition of
the constraints in the formula but not a logical decomposition of the formula itself, 
which will be performed by the tableau algorithm presented here.
%The tableau algorithm presented here performs a logical decomposition of the formula.
 The tableau algorithm takes as inputs the parameters $(\mathcal{P}, \mathcal{T}_{\varphi})$,
 and returns a set of groups of processes
 with their corresponding assigned LTL formulas $(group_{1}, \phi_1),..., (group_{n}, \phi_{n})$. 
The function 
$GetTerminalNode(\mathcal{B})$ returns
the terminal node (set of formulas at the last node) in the branch $\mathcal{B}$.
The algorithm consists of two phases: the exploring phase and the merging phase.
In the exploring phase, the branches of the tableau 
are examined in order to compute the set of processes
that contribute to their truth values.
In the merging phase, joint groups (groups with common processes) are merged.
This is necessary in order to avoid communications across groups.
We assume that processes within groups communicate with each other
using a static communication scheme in which the order
of communication is determined by their PIDs. 
%It is easy to see that processes in groups constructed by Algorithm \ref{alg:SplittingGroups} satisfy the key properties of disjointness given at Def. \ref{ANDDecomp}.
\begin{algorithm} [h!]
\begin{algorithmic}[1]
 \caption{Organizing processes into disjoint groups} 
\State \textbf{Input}: $(\mathcal{P}, \mathcal{T}_{\varphi})$ 
\State $group := \emptyset, ListOfGroups := \emptyset $
\State \textbf{Output}: $ListOfGroups $
\If{$|\mathcal{T}_{\varphi}|  = 1$}  \textbf{return} $\{(P, \varphi)\}$
\EndIf
\ForEach{$\mathcal{B} \in \mathcal{T}_{\varphi}$ } \Comment {Exploring phase}
\State $NT = GetTerminalNode (\mathcal{B})$
\State $\phi := \bigwedge_{ i = 0} ^{|NT|} (NT_i)$
%\ForEach{$p \in P$}
%\If{$atoms (\phi) \subseteq AP_p$}
%\State \textbf{add} $p$ \textbf{to} $group$
%\State \textbf{add} $(group, \phi)$ \textbf{to} %$ListOfGroups $
%\EndIf
%\State $group := \emptyset$
%\EndFor
\ForEach {$p \in P$}
\If{$(AP_p \cap atoms(\phi))  \neq \emptyset$}  
\State \textbf{add} $p$ \textbf{to} $group$
\EndIf
\EndFor
\State \textbf{add} $(group, \phi)$ \textbf{to} $ListOfGroups $
\State $group := \emptyset$
\EndFor 
%\State $\psi:= 0$
%\State $K:= \emptyset$
\ForEach{$M \in  ListOfGroups$}   \Comment{Merging phase}
\ForEach{$N \in  ListOfGroups \setminus M$}
\If{$M.group \cap N.group \neq \emptyset$}
\State $M.group := merge(M.group, N.group) $
\State $M.\phi := M.\phi \lor N.\phi$
\State \textbf{remove} $N$ \textbf{from} $ListOfGroups $
\State \textbf{add} $M$ \textbf{to} $ListOfGroups $
\EndIf
\EndFor
%\State $temp := temp \setminus M$
%\If{$K \neq \emptyset$}
%\State \textbf{add} $(K, \psi)$ %\textbf{to} $ListOfGroups $
\EndFor
\State \textbf{return} $ListOfGroups $
\label{alg:SplittingGroups}
\end{algorithmic}
\end{algorithm}

To show how one can  monitor 
CNFPs in a decentralised manner,
we consider response time properties 
as an example.

\begin{example}

Suppose we have a system
that consists of 7 processes $(p_0, p_1, p_2, p_3, p_4, p_5, p_6)$
as shown in Fig. \ref{fig:Ex2}. 
\begin{figure}[h]
  \includegraphics[width=\linewidth]{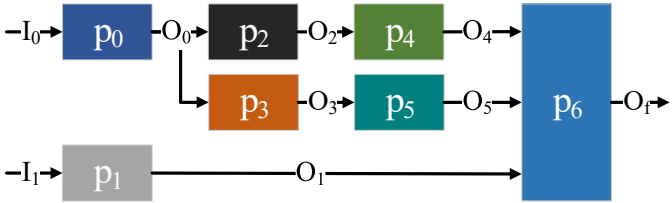}
  \caption{A system with multiple dependency paths.}
  \label{fig:Ex2}
\end{figure}
The property that we would like
to monitor in a decentralized manner for the given system
is $\varphi = G ((I_0 \land I_1)  \circ_{\leq 20} O_f)$.
Obviously, the formula in its given form cannot be monitored efficiently
in a decentralized way since it is given in
an abstract form where all inter-dependent variables are hidden.
We therefore need first to unwind the formula $\varphi$.
This can be performed by examining the dependency graph of the system.
The resulting unwound formula has the following form
\begin{equation} \label{equation}
\begin{array}[t]{l}
\varphi^{(U)} = G ( (O_1 \land O_4 \land O_5)   \circ_{\leq c_6} O_f) \land 
G (O_2  \circ_{\leq c_5} O_4) \land \\
\hspace*{35pt}G (O_3  \circ_{\leq c_4} O_5) \land G (O_0  \circ_{\leq c_3} O_2) \land G (O_0  \circ_{\leq c_2} O_3) \land \\
\hspace*{35pt} G (I_1  \circ_{\leq c_1} O_1) \land G(I_0  \circ_{\leq c_0} O_0).
\end{array}
\end{equation}
Formula (\ref{equation}) can be negated as follows
\begin{equation} \label{equation1}
\resizebox{\linewidth}{!} {
\begin{array}[t]{l}
\neg \varphi^{(U)} = F \neg ((O_1 \land O_4 \land O_5)   \circ_{\leq c_6} O_f) \lor 
F \neg (O_2  \circ_{\leq c_5} O_4) \lor \\
\hspace*{30pt} F \neg(O_3  \circ_{\leq c_4} O_5) \lor F \neg (O_0  \circ_{\leq c_3}{} O_2) \lor F \neg (O_0  \circ_{\leq c_2} O_3) \\ 
\hspace*{30pt} \lor  F \neg (I_1 \circ_{\leq c_1} O_1) \lor F \neg (I_0 \circ_{\leq c_0} O_0).
\end{array}}
\end{equation}

\begin{table} 
\caption{Sub-formulas and their corresponding assigned processes as generated by Algorithm \ref{alg:SplittingGroups}} \label{table: tableResult1}
\begin{center}
\begin{tabular}{ |c|c|}
\hline
Sub-formula & Monitoring process \\
\hline
$ F \neg (I_0 \circ_{\leq c_0} O_0)$ & $p_0$ \\
\hline
$ F \neg (I_1 \circ_{\leq c_1} O_1)$ & $p_1$ \\
\hline
$ F \neg (O_0 \circ_{\leq c_2} O_3)$ & $p_2$ \\
\hline
$ F \neg (O_0 \circ_{\leq c_3} O_2)$ & $p_3$ \\
\hline
$ F \neg (O_2 \circ_{\leq c_4} O_4)$ & $p_4$ \\
\hline
$ F \neg(O_3 \circ_{\leq c_5}  O_5 )$ & $p_5$ \\
\hline
$ F \neg ((O_1 \land O_4 \land O_5) \circ_{\leq c_6} O_f)$ & $p_6$ \\
\hline
\end{tabular}
\end{center}
\end{table}

Formula (\ref{equation1}) is then decomposed using the 
 tableau technique.
The resulting tableau of this formula consists of six branches
where each branch represents a way to falsify the original formula. 
We then use  Algorithm \ref{alg:SplittingGroups} 
to organise processes into disjoint groups
as described in Table \ref{table: tableResult1}.
Note that for this particular example
processes need not to communicate with each other
and they can detect violation of the monitored formula (if any) separately.
This is mainly due to the syntactic structure of the given formula.
Thanks to the tableau decomposition!

The attribution of processes to each sub-formulas for monitoring relies mainly on the observation power of processes (the set of variables that are locally observed by each process). For example, process $p_0$ is the only process among processes that can observe (locally) the variables $I_0$ and $O_0$ and hence the first formula in Table   \ref{table: tableResult1} is assigned to $p_0$.
The constraints $c_0,...,c_6$ can be computed using formula (\ref{SynthFormula}) as follows
%However, the parameter $c_0$ in the formula assigned to process $p_0$
%needs to be computed while examining the two dependency paths 
%$DP_1$ and $DP_2$ as there are two paths that lead
%from $I_0$ to $O_f$. Then the path with the minimum
%delay will be chosen when computing $c_0$.
 $$
\begin{array}[t]{l}
c_0 = \min ( 20 - cost (p_2) + cost (p_4) + cost (p_6)), \\
 \hspace*{40pt}		   (20- cost (p_3) + cost (p_5)+cost (p_6)))
\end{array}
$$
$$
c_1 = (20 - cost (p_6) ); ~
c_2 = (20 -  (cost (p_5) + cost (p_6)))
$$
$$
c_3 = (20 -  (cost (p_4) + cost (p_6))); ~
c_4 = (20 - cost(p_6))
$$
$$
c_5 = (20 - cost (p_6) ); ~
c_6 = 20.
$$

\end{example}

Suppose that the lower running costs (response times)
of processes are given as follows:
$cost (p_0) = 2, cost (p_1) = 3, cost (p_2) =1, cost (p_3) =2, cost (p_4) = 4,
 cost (p_5) = 3, cost (p_6) = 4$.
The values of the sub-constraints assigned to the processes will be as follows
$$
c_0 = 11 ;~ c_1 = 16; ~c_2 = 13;  c_3 = 12;
c_4 = c_5 = 16; ~c_6=  20.
$$
Hence, the earliest possible time at which
violation (if any) of the property $\varphi$
can be detected will be at $(x + 11)$,
where $x$ represents the time at which monitoring
has been initiated.

\subsection{The Soundness of Monitoring Framework}

\begin{comment}

In this section, we discuss the soundness of the presented 
unwinding algorithm. However, before showing the soundness
of the algorithm let us first summarize the key properties of the presented unwinding algorithm
 that can be used to unwind
  a response time property $\varphi$ of a given system $\mathcal{P}$,
  assuming that  $\varphi$ contains dependent variables that can be unwound.
  Among the most important properties are the following:

\begin{enumerate}

\item Each property $\phi_i$ is assigned to a unique
process $p_i$.

\item  Violation of  $\phi_i$ is a violation of  $\varphi$.

\item If process $p_i$ detects violation then all 
successor process of $p_i$ know about violation.

\item Communication is not needed during monitoring	 of $\varphi$.

%\item At least one process can detect satisfaction/violation of the property.

\end{enumerate}

\end{comment}

By assuming that the dependency graph of the system
is finite, one can show that the formula
$\varphi$ can be unwound  in a finite number of unwinding steps.
An upper bound on the number of unwinding steps 
can be computed in terms of the number of processes
and the number of dependent variables in $\varphi$. 
Termination of Algorithm \ref{alg: unwindingAlgor} is guaranteed since we assume that the dependency graph $\mathcal{G}$ does not have any circular dependencies. 
In Theorem 1, we show that the transformation (unwinding) of the input formula into a new formula that makes variable dependencies explicit is sound. That is, the original formula and the unwound formula are semantically equivalent. The soundness of transformation relies heavily on the employed graph traversal strategy that is used to unwind dependent variables in the input LTL formula.
The traversal strategy needs to respect the order at which intermediate variables are generated. %(i.e., it needs to respect the dependency relationships among processes).
This implies that monitoring of the original formula and the unwound formula  yields the same verdict. One of the key advantages of monitoring the unwound (extended) formula over the original (abstract) formula is that violations maybe detected way  before  the  original  property would fail and hence some corrective actions maybe taken to avoid severe consequences of failure.

\begin{theorem} (\textbf{Soundness of unwinding})
Let $\mathcal{G}$ be a dependency graph for a  system $\mathcal{P}$ and $\varphi$ be an LTL property formalising a cumulative cost property of $\mathcal{P}$.
Let $\varphi^{(U)}$ be an LTL formula obtained
by unwinding the property $\varphi$ using Algorithm \ref{alg: unwindingAlgor}.
Then $\varphi$ and $\varphi^{(U)}$ are semantically equivalent.
\end{theorem}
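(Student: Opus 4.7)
The plan is to prove semantic equivalence by structural induction on the number of dependent variables appearing in the input formula $\varphi$, leveraging the DAG structure of the dependency graph $\mathcal{G}$ to obtain a topological ordering in which the algorithm's rewrites can be arranged. Since $\mathcal{G}$ has no circular dependencies, each dependent variable is eventually reduced (in the worklist \texttt{list}) to a conjunction of predecessor variables, and this reduction cannot cycle back on itself.

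First I would dispatch the base case: if $\varphi$ mentions no dependent variables (only environment variables), the algorithm never sets the \texttt{unwind} flag to \emph{true}, returns $\varphi^{(U)}=\varphi$, and equivalence is immediate. For the inductive step, I would isolate a single rewriting event in which the algorithm replaces a sub-formula $(L \circ_{\leq q} v)$ by a conjunction that introduces a local dependency formula $((I_1 \land \dots \land I_k) \circ_{\leq c} v)$ associated with the process $p$ producing $v$. The central semantic observation here is that, because $\mathcal{C}$ is additive along the dependency paths, the cumulative cost of reaching a $v$-state from an $L$-state decomposes into the cost of reaching the inputs of $p$ plus the local running cost of $p$. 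I would then verify that Formula~(\ref{SynthFormula}) yields local bounds that telescope correctly along any path, so that $c_i + \sum_{j>i}\mathrm{cost}(p_j)=q$, and hence a trace witnesses all local bounds along some path iff it witnesses the original $\circ_{\leq q}$ bound. A standard compositionality lemma for LTL (replacement of equivalent sub-formulas inside Boolean and temporal contexts preserves satisfaction) then lifts the single-step equivalence through the outer structure of $\varphi$.

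The hard part, and where I expect most of the argument to live, is the multi-path case. When a dependent variable is reachable in $\mathcal{G}$ via several paths, Algorithm~\ref{alg: unwindingAlgor} takes the minimum over path costs when synthesising each $c_i$. I need to argue both soundness, that the conjunction of local constraints implies the global $q$-bound along every path, and tightness, that no trace legitimately satisfying the original formula is excluded by the stricter local bounds. Tightness is the delicate direction; I would handle it by noting that the semantics of $\circ_{\leq q}$ quantifies existentially over a witnessing future index, so the path achieving the minimum is the binding one, and any trace satisfying the local constraints along this tight path automatically satisfies the looser bounds on parallel paths.

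Finally, termination of the algorithm is immediate from acyclicity of $\mathcal{G}$ (so only finitely many rewrites occur), and the graph traversal strategy respects the order in which intermediate variables are generated, which is precisely the condition needed to apply the inductive hypothesis at each step. Chaining the single-step equivalence through this finite sequence of rewrites yields $\varphi \equiv \varphi^{(U)}$, completing the proof.
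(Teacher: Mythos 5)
Your proposal is correct and follows essentially the same route as the paper's own proof: a backward (topological) traversal of the acyclic dependency graph, replacement of each dependent variable by its local dependency formula, and decomposition of the global constraint $q$ via Formula~(\ref{SynthFormula}) with the minimum taken over parallel dependency paths. The paper casts this as a direct two-part argument (semantic preservation of the variable unwinding, plus an order-respecting constraint decomposition) rather than an explicit induction, and it is in fact less detailed than you are on the telescoping identity $c_i+\sum_{j>i}cost(p_j)=q$ and on the tightness direction in the multi-path case.
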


\begin{proof}.
Let $\varphi$ be an LTL formula formalising a cumulative cost property with a quantitative dependency constraint of the form $(L \circ_{\leq q} R)$, where
 $q \in \mathbb{N}$.
Let also  $\mathcal{G}$ be the dependency graph of the system $\mathcal{P}$ and $\varphi^{(U)}$ be an unwound version of $\varphi$
with the set of constraints $\{c_1,..., c_n\}$  obtained by running Algorithm \ref{alg: unwindingAlgor}.
Suppose that $Var^{\varphi}$ and  $Var^{\varphi^{(U)}}$ 
are the set of variables in $\varphi$ and $\varphi^{(U)}$ respectively.
%Let $com = (Var^{\varphi} \cap Var^{\varphi^{(U)}})$ be the set of common variables among the two formulas.
To prove the theorem we need to show 
%that  under the same truth assignments of variables in $com$ both formulas yields the same output.To do so we need to show  
that the construction of $\varphi^{(U)}$
from  $\varphi$ and $\mathcal{G}$ (Algorithm \ref{alg: unwindingAlgor}) meets the following correctness criteria: (1)  the unwinding of dependent variables in $\varphi$ using the graph $\mathcal{G}$ preserves the semantics of the property $\varphi$, and (2) the decomposition of the global constraint $q$ into local constraints respects the order at which intermediate variables are generated.
To show that Algorithm 1 meets the first criterion %(the soundness of unwinding of dependent variables), 
let us consider a dependent variable $v$ in $(L \circ_{\leq q} R)$.
To unwind the variable $v$, Algorithm \ref{alg: unwindingAlgor} conducts a backward analysis of the graph $\mathcal{G}$ starting from the process that generates $v$ until it reaches some source process (i.e., a process whose inputs are independent or environment inputs) (see lines 18-35). 
Note that some of intermediate variables along the explored dependency path affect the truth value of the variable $v$
(i.e., if truth values of intermediate variables are missing then the truth value of $v$ cannot be obtained).
Algorithm \ref{alg: unwindingAlgor} then constructs a full dependency formula for the explored path(s) which takes the form 
$(\phi_1 \circ_{\leq c_1} \phi_2) \land ... \land (\phi_{n-1} \circ_{\leq c_n} \phi_{n}$),
where $\phi_i$ can be either atomic formula or compound formula and $n$ represents the number of processes along the visited dependency path.  The above steps are repeated on each detected dependent variables in  $(L \circ_{\leq q} R)$. 
Finally, Algorithm \ref{alg: unwindingAlgor} replaces the quantitative dependency formula $(L \circ_{\leq q} R)$
 under analysis with the resultant unwound quantitative  dependency formula to conclude the unwinding process (see lines 36-37). 
It is easy to see that traversing the graph $\mathcal{G}$ in this manner (backward traversing)
that respects the order at which intermediate variables are generated ensures soundness of transformation.
The constraint $q$ is decomposed among monitoring processes using formula (\ref{SynthFormula}).
The constraint $q$ is decomposed into  local constraints in a way such that violations detected by individual processes are actual violations. To do so, we need to ensure that the constraint $c_i$ assigned to process $p_i$  represents the maximal cumulative accepted cost for the completion of an event generated by that process. To achieve this, Algorithm 1 subtracts the global constraint $q$ from the cost of the path that have the minimal cumulated cost  among all paths that lead from  process $p_i$ to  the  process  that generates the variable being unwound (see lines 22-24).
%Note that since the cost accumulates from one dependent variable to another and that formula (\ref{SynthFormula}) takes into consideration the accumulation of running costs of processes that generates these variables, then the 
It is easy to see that such decomposition of the constraint $q$ into sub-constraints $c_1,...,c_n$ preserves the semantics of the original quantitative formula.
 Hence, under the same truth assignments of variables in $(Var^{\varphi} \cap Var^{\varphi^{(U)}})$,
formulas $\varphi$ with the constraint $q$ and  $\varphi^{(U)}$ with the constraints $\{c_1,..., c_n\} $ yield the same output. %This concludes the proof of the theorem.
\end{proof}

\begin{theorem} \label{soundness} (\textbf{Soundness  of monitoring}). 
Let $\varphi \in LTL$  formalising a cumulative cost property  of a system $\mathcal{P}$ and $g \in \Sigma^{*}$ be a global trace.
Let $\varphi^{(U)}$ be the unwound version of $\varphi$. 
Then $g \models \varphi^{(U)} :B \rimp g \models \varphi : B$, where $B \in \{\top, \bot\}$.
\end{theorem}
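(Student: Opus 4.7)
The plan is to derive the monitoring soundness claim as a direct corollary of Theorem 1 (Soundness of unwinding). Theorem 1 already established a pointwise semantic equivalence between $\varphi$ and $\varphi^{(U)}$: under any truth assignment to the shared variables in $Var^{\varphi}\cap Var^{\varphi^{(U)}}$, the original global constraint $q$ on a path is respected if and only if every synthesized local constraint $c_i$ produced by formula (\ref{SynthFormula}) is respected along its segment of that path. The present theorem just lifts this equivalence from LTL satisfaction to the monitoring verdict predicate $\models\,\cdot\,: B$.

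First I would unfold what $g \models \psi : B$ means for a finite trace $g\in\Sigma^{*}$. Using the finite-trace verdict semantics implicit earlier in the paper, $g\models\psi:\top$ holds when every infinite continuation of $g$ is a model of $\psi$, and dually for $\bot$. So the target implication $g\models\varphi^{(U)}:B \rimp g\models\varphi:B$ reduces to showing that for every admissible extension $g\sigma\in\Sigma^{\infty}$, the equivalence $g\sigma\models\varphi^{(U)} \iff g\sigma\models\varphi$ holds. Once this is in place, the shared verdict over all extensions transfers immediately from $\varphi^{(U)}$ to $\varphi$.

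The core step is then the pointwise appeal to Theorem 1 along each extension $g\sigma$. Since unwinding only exposes intermediate variables that are functionally determined by the dependency graph $\mathcal{G}$, any extension realizable on $\mathcal{P}$ gives a valuation of the shared atoms together with a unique, graph-consistent valuation of the intermediate atoms. Theorem 1 then guarantees $g\sigma\models\varphi^{(U)}$ iff $g\sigma\models\varphi$. Quantifying over all such extensions yields the desired verdict-preserving implication. Because Theorem 1 is actually an equivalence, the converse direction comes for free; the theorem states only the forward direction needed for soundness.

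The main obstacle I expect is the vocabulary mismatch between the two formulas: $\varphi^{(U)}$ mentions intermediate variables that do not occur in $\varphi$, so the quantification over continuations has to be restricted to those that are consistent with the dependency graph $\mathcal{G}$. The proof should make this explicit by arguing that any execution of $\mathcal{P}$ which extends $g$ automatically induces a $\mathcal{G}$-consistent valuation, so no ``unrealizable'' word over $2^{AP^{(U)}}$ is ever presented to the monitors. With this restriction in hand, the lift from Theorem 1's pointwise equivalence to the verdict-level implication is immediate and requires no further case analysis.
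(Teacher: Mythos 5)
Your proposal is correct and takes essentially the same route as the paper: the paper's entire proof is a one-line appeal to Theorem~1, stating that the result follows directly because $\varphi$ and $\varphi^{(U)}$ are semantically equivalent. Your additional scaffolding --- unfolding the finite-trace verdict semantics over infinite continuations and restricting to dependency-graph-consistent valuations to handle the vocabulary mismatch --- is a faithful elaboration of details the paper leaves implicit, not a different argument.
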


\begin{proof}
 Theorem 2 is a direct implication of theorem 1 as formulas $\varphi$ and $\varphi^{(U)}$ are semantically equivalent.
\end{proof}

\begin{comment}

\begin{theorem} \label{soundness} (\textbf{Soundness}). 
Let $\varphi \in LTL$  formalising a response time
property of a system $P$ and $\alpha \in \Sigma^{*}$ be a global trace.
Let $\varphi^{(U)}$ be the unwound version of $\varphi$. 
Then $\alpha \models_{Den} \varphi^{(U)} :B \rimp \alpha \models_{Cen} \varphi^{(U)} : B$, where $B \in \{\top, \bot\}$.
\end{theorem}

Soundness means that all verdicts (truth values taken from a truth-domain) found by the decentralized monitoring algorithm for a global trace $\alpha$ with respect to the property $\varphi$ are actual verdicts that would be found by a centralized monitoring algorithm that have access to the trace $\alpha$. 
The formula unwinding technique
does not only help to speed-up detection
of violations of the monitored formulas, 
but also to develop some recovery plans
in case of violation
(i.e., plans that may be executed by processes 
upon violation of assigned sub-formulas)
as we demonstrate in the following section.

\end{comment}

\section{A Case Study}

\subsection{A description of The Case Study}

The case study presented here is based on a Fischertechnik training model which we use to demonstrate the  advantages of the unwinding approach and the underlying monitoring framework. This model factory, as shown in Fig. ~\ref{fig:testbed}, is a sorting line which sorts tokens based 
on their color into storage bins. 

\begin{figure}[htbp]
\small
	\centering
	\includegraphics[page=1, width=\columnwidth]{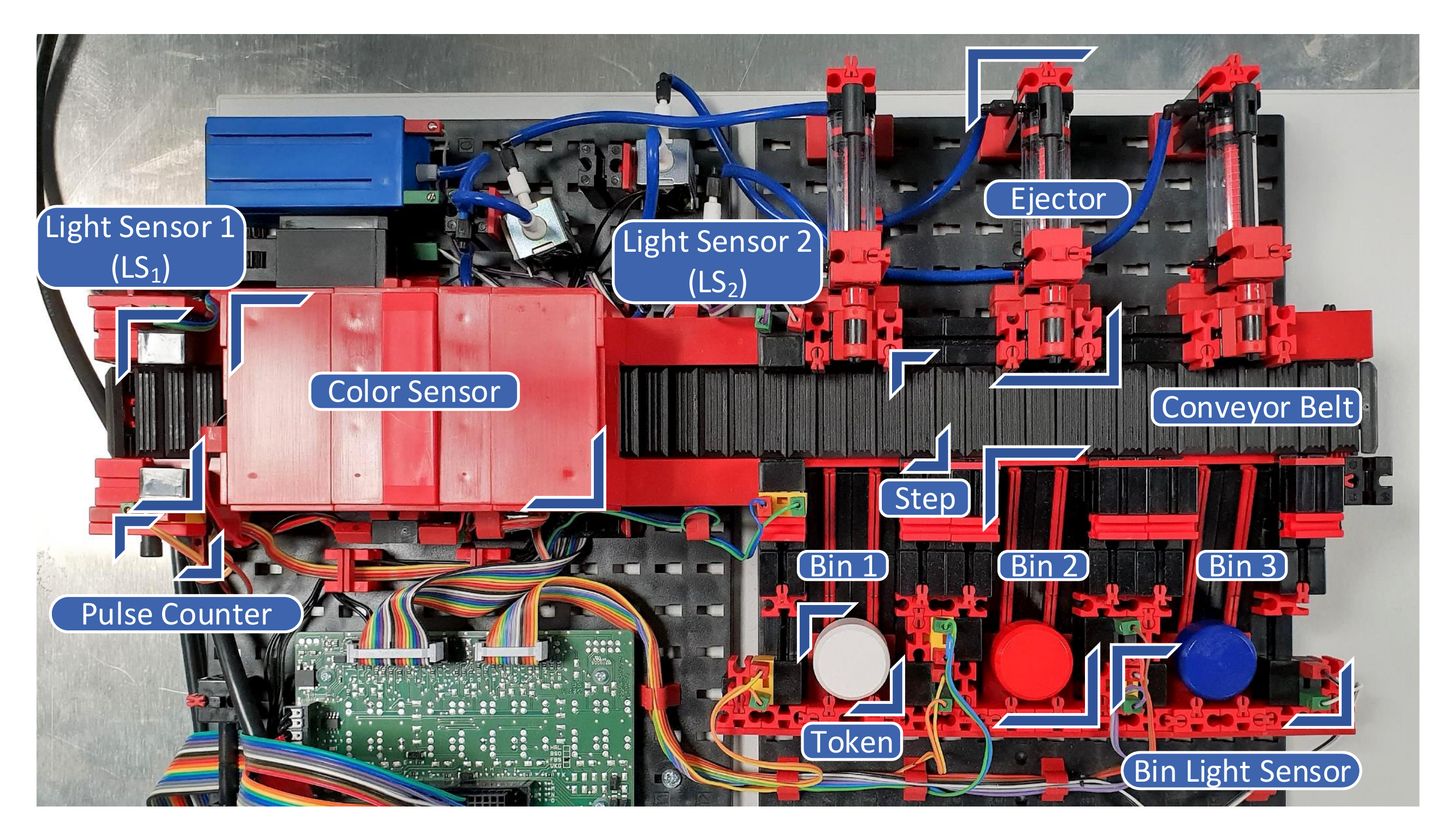}
	\caption{Fischertechnik Testbed: Sorting line with color detection.}
	\label{fig:testbed}
\end{figure}

The processes of the model factory, including its actuators and sensors,
are given below.
\begin{itemize}
	\item \textbf{Light sensors: } Two light sensors for the detection of a token on the conveyor belt.
	\item \textbf{Color sensor: } This sensor provides an analog signal for color determination of a token. 
	\item \textbf{Ejector: } One of three ejectors is used to push the color sorted token into the storage bins.
	\item \textbf{Storage bins: } There are three storage bins where each has a sensor.
	\item \textbf{Direct current (DC) motor: } This motor is responsible
	for providing the power necessary for the rotation of the belt.
	\item \textbf{Pulse counter: } An encoder to track the movement of the conveyor belt through step counts.
	\item \textbf{Conveyor belt: } This is a physical belt which moves the token to its bin.
	\item \textbf{Tokens: } There are two types of token one is a white token and the other is a blue token.
\end{itemize}

\begin{figure}[htbp]
\small
	\centering
	\includegraphics[page=2, width=3.5in]{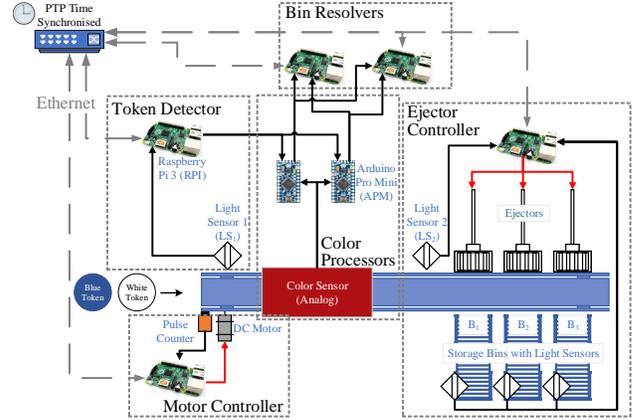}
	\caption{Hardware Architecture: The Fischertechnik training model with the various processes grouped together: A Token Detector, Color Processors, bin Resolvers and the Ejector Controller.}
	\label{fig:hw_arch}
\end{figure}

The various processes of the model factory are shown in Figure~\ref{fig:hw_arch}. A token first enters the conveyor belt from the left side and is then detected by the first light sensor $ LS_1 $. It moves along the conveyor belt and reaches the color sensor which then identifies the color of the token (i.e., white or blue). As it moves along the conveyor belt, the token passes through the second light sensor $ LS_2 $. Then after passing the light sensor $ LS_2 $, the ejectors then eject the token into one of the three bins ($ B_1$, $B_2 $ or $B_3 $). The   bin $ B_1 $ is designated for the white token while the bin $ B_2 $ is designated
for the blue token. The movement of a token is tracked through the pulse counter
which counts the number of steps the token made on the  conveyor belt.

To control the sorting line, a collective of Raspberry Pi (RPI) 3s are used as computation nodes while the Arduino Pro Minis (APMs) are used as analogue to digital converters to process the analogue signal from the color sensor. Once the analogue signal is processed, the color information is communicated to the RPI. A motor controller (MC) regulates the DC motor which in turn regulates the belt's rotation and also tracks the belt's steps through the pulse counter. The token detector (TD) monitors the arrival of tokens through $ LS_1 $ and triggers the color processors (WCP and BCP) to read the color sensor. Both of them are used to process the analogue value from the color sensor to determine the color of the token.  There are two managerial processes, the bin resolvers (WBR and BBR) which receive the color output from their CPs and determine the token's bin placement. As the token's color is being read while it is moving, the color sensor produces a noisy analogue value, leading to inaccurate color readings. Each color processor in the system is designed to be biased towards their assigned color to combat this problem. After that, the ejector controller (EC) receives bin information from the BRs and triggers the corresponding ejector. EC also monitors inputs from $ LS_2 $ to ensure the timely arrival of the token. The dependencies among processes can be seen in Figure~\ref{fig:dep_graph}.

The application programs on the RPIs are written with 4DIAC, which is based on the IEC 61499 standard~\cite{4DIAC}. RPIs in the system are networked through Ethernet and are also time synchronized through Precision Time Protocol to enable decentralized monitoring through the unwinding technique.

\begin{figure}[htbp]
\small
	\centering
	\includegraphics[page=3, width= 3.5in]{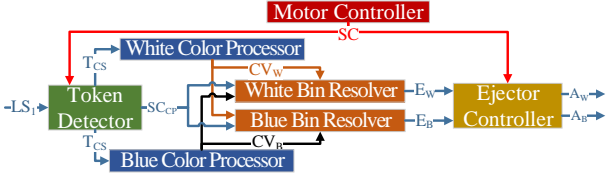}
	\caption{Dependency Graph \ensuremath{\mathcal{G}}: Shows the input-output dependencies of the various processes in the system.}
	\label{fig:dep_graph}
\end{figure}

The dependency graph $ \ensuremath{\mathcal{G}} $ illustrated in Figure~\ref{fig:dep_graph} shows an end-to-end timing requirement for all the processes such that both the white and the blue tokens can be sorted correctly into their respective bins. We are then interested in verifying
two response time properties for the system: the first is $ \varphi_W $ which represents the system formula for sorting the white tokens and the second is $ \varphi_B $ which represents the system formula for sorting the blue tokens. We aim to monitor these formulae in a decentralized way. The notations used during the unwinding of both formulae are given in Table~\ref{table:io_var}. The two response time formulae of the system can be described as follows
$$\varphi_W = G ((LS_1 \land  SC) \circ_{\leq 5} A_{W}) 
$$
$$
 \varphi_B = G((LS_1 \land SC)\circ_{\leq 6} A_{B}) $$

\begin{table}[htbp]
	\centering
	\caption{I/O variables in the system}
	\begin{tabular}{||c|c||}
		\hline
		Variable & Definition  \\ [0.25ex]
		\hline\hline
		$ LS_{1/2} $      & Light Sensor 1, 2  \\
		%        Bin 1/2 Light Sensor        & $ LS_{W/B} $ \\
		$ T_{CS} $      & Trigger Color Sensor   \\
		$ CV_{W/B} $    & Annotated Color Value \\
		$ SC $          & Current Step Count        \\
		$ SC_{CP} $     & Token Step Count at CP     \\
		$ E_{W/B} $     & Bin ejection information to White/Blue bin      \\
		$ A_{W/B} $        & Arrival at White/Blue bin      \\ [0.25ex]
		
		\hline
	\end{tabular}
	\label{table:io_var}
\end{table}

We then unwind the two formulae $ \varphi_W $ and $ \varphi_B $ 
using the unwinding technique described at Section~\ref{Sec:unfoldingProcess}. 
In Table \ref{table:unwind_wht} we give the resulting sub-formulae
that result from the unwinding process of the formula $ \varphi_W $
and in Table \ref{table:unwind_blu} we give the resulting sub-formulae
that result from the unwinding process of the formula $ \varphi_B $
with their corresponding timing constraints.

\begin{table}
	\centering
	\caption{Formula Unwinding for the White Token}
	\begin{tabular}{||c|c|c|c||}
		\hline
		\# & Formula & c-value $ (seconds) $  & Process \\ [0.25ex]
		\hline\hline
		$ \varphi_W $     & $ LS_{1} \land SC \circ A_W $ & 4 & EC \\
		$ \phi_{W4} $    & $ E_W \circ A_W $ & 4 & EC \\
		$ \phi_{W3} $    & $ CV_{W} \land SC_{CP} \circ E_W $ & 2 & WBR \\
		$ \phi_{W2} $    & $ T_{CS}  \circ CV_{W} $ &  2 & WBR \\
		$ \phi_{W1} $    & $ LS_{1} \land SC \circ SC_{CP} $ & 2 & TD \\ 
		$ \phi_{*} $    & $ LS_{1} \land SC \circ T_{CS} $ & 1 & TD \\ [0.25ex]        
		\hline
	\end{tabular}
	\label{table:unwind_wht}
\end{table}

\begin{table}
	\centering
	\caption{Formula Unwinding for the Blue Token}
	\begin{tabular}{||c|c|c|c||}
		\hline
		\# & Formula & c-value $ (seconds) $ & Process \\ [0.25ex]
		\hline\hline
		$ \varphi_{B} $     & $ LS_{1} \land SC \circ A_B $ & 5 & EC \\
		$ \phi_{B4} $    & $ E_B \circ A_B $ & 5 & EC \\
		$ \phi_{B3} $    & $ CV_{B} \land SC_{CP} \circ E_B $ &  2 & BBR\\
		$ \phi_{B2} $    & $ T_{CS}  \circ CV_{B} $ &  2 & BBR \\
		$ \phi_{B1} $    & $ LS_{1} \land SC \circ SC_{CP} $ & 2 & TD \\ 
		$ \phi_{*} $    & $ LS_{1} \land SC \circ T_{CS} $ & 1 & TD \\ [0.25ex]        
		\hline
	\end{tabular}
	\label{table:unwind_blu}
\end{table}

\begin{table*} 
	\centering
	\caption{Faults and recovery plans}
	\begin{tabular}{||c|c||c||}
		\hline
		Formula & Fault & Recovery Plans \\ [0.25ex]
		\hline\hline
		$ \phi_* $          & Failure to trigger the color sensor $ (T_{CS}) $
		& Unclassified token goes into bin 3  \\ 
		$ \phi_{W1,B1} $    & Absence of the step count of the token $ (SC_{CP}) $     
		& Reference step count at $ LS_2 $ \\
		$ \phi_{W2,B2} $    & Color sensor output $ (CV_{W/B}) $ delay
		& Reduce conveyor belt speed \\
		$ \phi_{W3,B3} $    & WBR or BBR output $ (E_{W/B}) $ delay
		& Reduce conveyor belt speed \\
		$ \phi_{W4,B4} $    & Token fails to reach assigned bin $ (A_{W/B}) $
		& Token goes into bin 3   \\  [0.25ex]
		\hline
	\end{tabular} 
	\label{table:faults}
\end{table*}

Decentralized monitoring for each of the processes can now be done based on their sub-formulae. Note that in both Tables~\ref{table:unwind_wht} and \ref{table:unwind_blu}, there is a special sub-formula $ \phi_* $. $ \phi_* $ exists as a special observer as there is a separate requirement to trigger ($ T_{CS} $) the color processors at the moment the token is beneath the color sensor. Monitoring for each sub-formula are done at their respective processes, with the exception of sub-formulae $ \phi_{W2} $ and $ \phi_{B2} $. Instead, the observers of these sub-formulae are on WBR and BBR respectively, as WCP and BCP cannot host the 4DIAC runtime environment.

\subsection{The Failure Scenarios}

In this section, we describe some possible failure scenarios that may occur when running
the presented model factory. Depending on how early the failure is detected (i.e., violation of the main formula), different recovery plans may be performed to meet the main objective of the token sorting process. 

To enable failure recovery plans, a `clever' resilience mechanism is additionally employed to take advantage of the violations reported by the observers through monitoring of the sub-formulae in each process. A summary of the violations which can be detected through decentralized monitoring of the system is listed in Table~\ref{table:faults}. The faults listed in the table are based on timing deviations of their expected behavior.

\begin{enumerate} 
	\item 	
	{\textbf{Process / Formula: } Token detector/$ \phi_* $}.
	\textbf{Fault: } A failure to trigger the color sensor leads to an unclassified token even when the token is detected. Since color sorting is the main objective, a failure here constitutes to major fault of the system.
	\textbf{Recovery: } As a last resort, we push the unclassified token into bin 3 for re-sorting.
	
	\item 	
	\textbf{Process / Formulae: } Token detector/$ \phi_{W1/B1} $.
	\textbf{Fault: } It is plausible that the step count $ SC_{CP} $ read by the token detector is lost. Losing the step count undermines the system's ability to track the position of the token.
	\textbf{Recovery: } As this can be detected early by TD, we can respond by making use of the second light sensor $ LS_2 $ to redetermine the token's position and push the token into its rightful bin.
	
	\item 	
	\textbf{Process / Formulae: } Color processors / $ \phi_{W2/B2} $.
	\textbf{Fault: } As mentioned previously, the observers for the color processors (i.e., WCP/BCP) are hosted on WBR and BBR, respectively. A delay in the processing the color value of the token extends the execution latency of the color processors. 
	\textbf{Recovery: } To prevent the token from missing the ejector before a color value is produced, we can slow down the motor of the conveyor belt, allowing for more time for the color processors to produce an output.
	
	\item 	
	\textbf{Process / Formulae: } Bin resolvers / $ \phi_{W3/B3} $.
	\textbf{Fault: } The bin resolvers are responsible for assigning the token to their respective bins based on their color read by the color sensor. A delay in the making a decision for assigning the token extends the execution latency of the bin resolvers. 
	\textbf{Recovery: } To prevent the token from missing the ejector before a decision is reached, we can slow down the motor of the conveyor belt, allowing for more time for the bin resolvers to produce an output.
	
	\item 	
\textbf{Process / Formulae: } Ejector controller / $ \phi_{W4/B4} $.
	\textbf{Fault: } The sorted token has to reach its assigned bin. A fault occurs when the sorted token is unable to reach its assigned bin.
	\textbf{Recovery: } The token is pushed to bin 3 for re-sorting.
	
\end{enumerate}

The above recovery plans have been implemented on the above
described case study which allows the processes to respond
appropriately when a failure occurs (or an early violation of the properties is detected).
The goal of these recovery plans is to minimize the impact
of violations on the system by allowing the process to take the most appropriate possible action given the time at which (potential) violation of the main formula is detected. Without early violation detection, it is not possible to recover sufficiently in time to place the tokens into their respective bins. This is evident as seen in sub-formulae $ \phi_{W4/B4} $, which is equivalent to just monitoring the overall system formulae $ \varphi_{W/B} $. The tokens can only be ejected into bin 3 at the last point of violation detection.
A short  video documentation of the case study is available at \url{https://youtu.be/5CUH0Z2qaBM}.

 \section{Related Work}

 The key novelty of our presented  framework  comparing to existing frameworks \cite{Sen2004,BauerF12,ColomboF14,FalconeCF14,Scheffel14,MostafaB15,ColomboF16,Bataineh2019} is the employment of the tableau construction and the formula unwinding technique to split and distribute LTL formulas with quantitative operators so that monitoring of such class of properties can be conducted in a decentralised manner. The employment of these techniques allows  processes to detect early violations of properties and  perform some corrective or recovery actions to avoid severe consequences of failures.

Sen et al. \cite{Sen2004} propose a monitoring framework for safety properties of systems using the past-time linear temporal logic (ptLTL). However, the algorithm is unsound. The evaluation of some properties may be overlooked in their framework. This is because monitors gain knowledge about the state of the system by piggybacking  on the existing communication among processes. That is, if processes rarely communicate, then monitors exchange little information, and hence, some violations  may remain undetected. 
The authors   have not considered quantitative properties of  systems as we have done in this work and hence their framework cannot be directly applied to deal with this class of properties. %They restrict themselves to a class of safety properties expressed in ptLTL.

Bauer and Falcone \cite{BauerF12} propose a decentralized framework for runtime monitoring of LTL. The framework is constructed from local monitors which can only observe the truth value of a predefined subset of propositional variables. The local monitors can communicate their observations in the form of a (rewritten) LTL formula towards its neighbors. 
The approach has the risk of 
saturating the communication devices as processes 
send their obligations as rewritten temporal formulas.
Mostafa and Bonakdarpour \cite{MostafaB15} propose similar decentralized LTL monitoring framework, but truth value of atomic variables
rather than rewritten formulas are shared. 
 Our work differs from these works in that we consider decentralised monitoring of
 quantitative properties where we extend the classical LTL with a quantitative dependency operator of the form $\circ_{\leq q}$. 
 The technical challenge of
monitoring quantitative properties in this setting consists of
translating global constraints into local ones which can be monitored by individual processes.

The work of Falcone et al. \cite{FalconeCF14} proposes a general decentralized monitoring algorithm in which the input specification is given as a deterministic finite-state automaton rather than an LTL formula. Their algorithm takes advantage of the semantics of finite-word automata, and hence they avoid the monitorability issues induced by the infinite-words semantics of LTL. They show that their implementation outperforms the Bauer and Falcone decentralized LTL algorithm \cite{BauerF12} using several monitoring metrics. It is not clear to us how the decentralised monitoring framework of  \cite{FalconeCF14}  based on finite state automata can be used to monitor quantitative LTL properties in which costs may accumulate from one state to another. 
Our framework employs also an unwinding algorithm which helps to optimise decentralised monitoring of  properties  in  a  way  such  that  violations  can be detected way before the original property would fail. Early detection of violations cannot be achieved using the monitoring framework of \cite{FalconeCF14}.

 Colombo and Falcone \cite{ColomboF16} propose a new way of organizing monitors called choreography, where monitors are organized as a tree across the distributed system, and each child feeds intermediate results to its parent. The proposed approach tries to minimize the communication induced by the distributed nature of the system and focuses on how to automatically split an LTL formula according to the architecture of the system.
 However, their framework cannot be used to efficiently monitor  LTL properties with quantitative operators. 
The key difference between our framework and their framework is the employment of the tableau construction and formula unwinding technique to split and distribute the global quantitative constraint, which help to detect early violations of the monitored property and  perform some recovery actions.

  Recently, Al-Bataineh et al. \cite{Bataineh2019} 
presented a  monitoring framework
for LTL formulas in which functional properties are modeled as LTL formulas and decomposed using the tableau decomposition rules.
They showed how to use tableau to optimise the underlying decentralized monitoring process of functional properties for synchronous distributed  systems.  However, quantitative LTL properties
 have not been considered in their work and hence cumulative cost properties cannot be efficiently monitored in their decentralised framework.

Several extensions to the classical LTL (both past and future LTL)  have been proposed in the prior literature \cite{Hubert2000,DEMRI2007380,Bersani2010}.
The authors of \cite{Hubert2000,DEMRI2007380,Bersani2010}  focused mainly on showing decidability of some restricted forms of constraint systems using automata-theoretic technique. They extended  until with arithmetic expressions with integer variables, which maybe used to model quantitative properties of systems. 
However, in our work we extend the LTL with a quantitative dependency operator of the form $\circ_{\leq q}$ which can be used to capture quantitative dependencies among variables/modules in the system model.
Such extension allows us to monitor in a straightforward manner an interesting class of quantitative properties, namely cumulative cost properties of systems.
The introduced quantitative dependency operator allows direct verification of cumulated costs among dependent parts (modules, processes, or variables) of the system being monitored, where the left and right operands of the dependency operator can be atomic formula, compound formula, or LTL formula.

\section{Conclusion}

The topic and idea of splitting monitoring of systems into simpler monitoring tasks is an interesting current research problem, especially when considering upcoming applications like cloud, edge and fog computing.
In this work, we introduced a methodology to decentralize the monitoring of cumulative cost properties of systems 
formalised as temporal properties and represented as a tableau. 
 The decentralization process works by systematically transforming (``unwinding'')
the system level LTL formula into a semantically equivalent formula which can then be decomposed and
distributed across the system's processes/nodes in order to monitor
fulfilment of the respective sub-properties at runtime. If a monitor detects
a violation, depending on its nature, the error can either be forwarded to
a superordinate process or corrective
actions may be initiated directly at the level of the process detecting
the fault. As such violations can typically be detected way before the
original property would fail, such corrective actions can even
avoid the system failure in some cases.
The methodology is demonstrated
with two synthetic examples and a real experiment involving a Fischertechnik
plant model.

\section*{Acknowledgement}
This work was supported by Delta-NTU Corporate Lab for Cyber-Physical Systems with funding support from Delta Electronics Inc. and the National Research Foundation (NRF) Singapore under the Corp Lab@University Scheme.

 \bibliographystyle{IEEEtran}
 \bibliography{references}

\end{document}